\DeclareMathOperator{\polylog}{\mathrm{polylog}}
\DeclareMathOperator{\poly}{\mathrm{poly}}
\newcommand{\id}[1]{\mathbbm{1}_{#1}}
\newcommand{\prob}[1]{\mathrm{Pr}\left[{#1}\right]}
\title{Improved Algorithms for Edge Colouring in the W-Streaming Model}
\author{Moses Charikar and Paul Liu}{Stanford University, USA}{\{moses, paul.liu\}@stanford.edu}{}{}
\authorrunning{M. Charikar and P. Liu} 
\subjclass{
Theory of computation $\rightarrow$ Streaming Models; Approximation algorithms analysis.
Mathematics of computing $\rightarrow$ Graph coloring.
} 
\begin{document}

\maketitle

\begin{abstract}
In the W-streaming model, an algorithm is given $O(n \polylog n)$ space and must process a large graph of up to $O(n^2)$ edges. In this short note we give two algorithms for edge colouring under the W-streaming model. For edge colouring in W-streaming, a colour for every edge must be determined by the time all the edges are streamed. Our first algorithm uses $\Delta + o(\Delta)$ colours in $O(n \log^2 n)$ space when the edges arrive according to a uniformly random permutation. The second  algorithm uses $(1 + o(1))\Delta^2 / s$ colours in $\tilde{O}(n s)$ space when edges arrival adversarially.
\end{abstract}

The problem of edge colouring asks for an assignment of colours to the edges of a graph such that no two incident edges have the same colour and the total number of colours used is minimized. In this short note, we study the problem of edge colouring in the W-streaming model.

In the W-streaming model, input is given in a streaming fashion, and an output stream is written as the input is processed. An algorithm can choose to stream over the input many times, although for our purposes we focus on one-pass algorithms. With respect to edge colouring in W-streaming, an algorithm is given $O(n \polylog n)$ space and must process a large graph of $n$ vertices, up to $O(n^2)$ edges, and maximum degree $\Delta$. During the course of the stream, each edge must be coloured by the algorithm, and their colours announced by the time the stream completes. Several classical graph problems have been studied in the W-streaming model, including connectivity, minimum spanning tree, euler tours, and in particular edge colouring~\cite{BDHKS19, DEMR10, DFR09, GSS17}. Since only $O(n\polylog(n))$ space is allowed, most edges have their colours announced close to the time they are streamed in. In contrast to the online model, edge colours do not have to be announced the instant they are seen. Instead, we may use our $O(n \polylog n)$ space to ``buffer'' the announcement of the edge colours.

In the online model, a lower bound of Bar-Noy et al.~\cite{BMN92} shows that no algorithm can do better than a $2\Delta$ colouring when $\Delta = O(\log n)$ ($2\Delta$ is achieved by the greedy algorithm). On the algorithmic side, several $O(\Delta)$ colouring algorithms are known, with approximation factors depending on whether the input is \emph{vertex-arrival} or \emph{edge-arrival}. In the vertex arrival model, all edges out-going from a vertex are streamed consecutively in a batch. In the edge-arrival model, there are no constraints on the order of the edges. For \emph{adversarial orders} in the vertex-arrival model, the best known result is achieved by Cohen et al.~\cite{CPW19} using $(1+o(1))\Delta$ colours. For \emph{random orders} in the edge-arrival model, an algorithm  using $(1+o(1))\Delta$ colours by Bhattacharya et al.~\cite{BGW20} was simultaneously discovered at the same time as this paper, surpassing the previous state-of-the-art of $1.26\Delta$ by Bahmani et al.~\cite{BMM10}. However, the online model allows for $O(\poly(n))$ memory, so the methods taken by previous works do not apply in W-streaming. In particular, even the classic greedy algorithm requires $O(n\Delta)$ space. In the offline model, algorithmic versions of Vizing's theorem~\cite{D00, MG92} gives a polynomial time $\Delta+1$ colouring algorithm for edge colouring. Any graph of maximum degree $\Delta$ requires at least $\Delta$ colours, so distinguishing between $\Delta$ and $\Delta+1$ known to be NP-Hard.

The current state-of-the-art in W-streaming is by Behnezhad et al.~\cite{BDHKS19}, who achieve a $2e\Delta$-colouring algorithm for the random arrival case and a $O(\Delta^2)$ algorithm for the adversarial arrival case. In this short note, we give two algorithms for edge colouring under the W-streaming model that improves upon Behnezhad et al.~\cite{BDHKS19}. The first is a $\Delta + o(\Delta)$ algorithm for the random arrival case using $O(n \log n)$ space. The second is a $O(\Delta^2 / s)$ algorithm for the adversarial arrival case using $\tilde{O}(n s)$ space.

\section{A simple algorithm for edge colouring under random arrivals}
When the edges are streamed in randomly, consider the following simple algorithm:
\begin{tcolorbox}[title={Algorithm 1: A simple algorithm for random arrivals}]
\label{alg:simple-random}
Break the edges into $C$ sized chunks where $C = \alpha^2 n$. For each chunk, use any offline $\Delta+1$ coloring algorithm. For each chunk use a different palette of colours.
\end{tcolorbox}

\begin{theorem}
When $\alpha = \Omega(\log n)$, Algorithm~\ref{alg:simple-random} colours any input graph with $\Delta + O(\Delta/\sqrt{\alpha})$ colours with high probability. For a $\Delta+o(\Delta)$ colouring, $O(n \log^2 n)$ space is sufficient.
\end{theorem}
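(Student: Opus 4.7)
My plan is to upper bound the total number of colours the algorithm uses by $\sum_{i=1}^{K}(\Delta_i+1)$, where $K = m/C$ is the number of chunks, $m$ is the total number of edges, and $\Delta_i$ is the maximum degree of the subgraph formed by chunk $i$. The key structural observation is that under a uniformly random edge arrival, chunk $i$ (marginally) is a uniformly random size-$C$ subset of the $m$ edges. Hence, for every fixed vertex $v$, the degree $d_i(v)$ of $v$ inside chunk $i$ is a hypergeometric random variable with mean at most $\Delta C/m$.

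The first technical step I would carry out is a Chernoff--Hoeffding tail bound for sampling without replacement, yielding
\[
d_i(v) \;\leq\; \frac{\Delta C}{m} + O\!\left(\sqrt{\frac{\Delta C}{m}\,\log n}\right)
\]
with probability $1 - n^{-\Omega(1)}$, followed by a union bound over the at most $nK \le n^2$ pairs $(v,i)$. Summing the resulting simultaneous bound on $\Delta_i$ over all chunks gives total colour count
\[
\Delta \;+\; O\!\left(\sqrt{\frac{m\Delta \log n}{C}}\right) \;+\; K.
\]
Plugging in $C = \alpha^2 n$ and $m \leq n\Delta/2$, the middle term becomes $O(\Delta \sqrt{\log n}/\alpha)$, which is $O(\Delta/\sqrt{\alpha})$ precisely when $\alpha = \Omega(\log n)$. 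The per-chunk $+1$ slack contributes $K \leq \Delta/(2\alpha^2)$, which is also absorbed into $O(\Delta/\sqrt{\alpha})$. This yields the first claim.

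For the space statement, I would observe that the algorithm needs only to buffer the current chunk, which has $C = \alpha^2 n$ edges, in order to invoke the offline Vizing-type $\Delta+1$ colouring procedure on it. Taking $\alpha = \Theta(\log n)$ therefore gives $O(n \log^2 n)$ space while driving the additive error down to $O(\Delta/\sqrt{\log n}) = o(\Delta)$, which is the claimed $\Delta+o(\Delta)$ colouring.

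The main technical care will be in choosing a concentration inequality that is genuinely valid for sampling without replacement rather than for independent Bernoulli trials; this is handled cleanly by Hoeffding's classical reduction from hypergeometric to i.i.d.\ variables. Two minor subtleties are that the last chunk may have fewer than $C$ edges (this can only decrease $\Delta_i$ and hence does not affect the bound), and that the offline $\Delta+1$ subroutine should colour the $i$-th chunk with exactly $\Delta_i+1$ colours using a fresh palette disjoint from earlier chunks, which is immediate since colours can simply be renumbered as the stream progresses.
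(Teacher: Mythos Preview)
Your proposal is correct and follows essentially the same route as the paper: both arguments observe that each chunk is a uniformly random size-$C$ subset of the edges, bound the per-chunk degree of every vertex via a Chernoff/Hoeffding inequality valid for the hypergeometric distribution, take a union bound over all vertex--chunk pairs, and then sum the resulting $\Delta_i+1$ bounds. The only cosmetic difference is that the paper parameterises the multiplicative Chernoff deviation as $\delta_u=\Delta/(N\hat d_u\sqrt{\alpha})$ so that the $\sqrt{\alpha}$ appears directly, whereas you use the additive $O(\sqrt{\mu\log n})$ form and then invoke $\alpha=\Omega(\log n)$ to convert $O(\Delta\sqrt{\log n}/\alpha)$ into $O(\Delta/\sqrt{\alpha})$; these are equivalent presentations of the same computation.
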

\begin{proof}
Let $G_i$ be the subgraph induced by chunk $i$, and let $d(u)$ and $d_i(u)$ be the degree of $u$ in $G$ and $G_i$ respectively. For the ease of exposition, assume that $C$ divides $E$ exactly. The total number of chunks is $N = \frac{E}{C}$. Fix any $i \in \{1, \ldots, N\}$. Since the edges are randomly ordered, the expected degree of $u$ in $G_i$ is $\hat{d_u} = \frac{d(u)}{N}$ in expectation. For each of the $C$ edges in our chunk, let $\id{u,j}$ be an indicator variable for the event that the $j$-th edge is incident on node $u$. Thus, we have $d_i(u) = \sum_{j=1}^{C}  \id{u,j}$. Since the edges are ordered according to a random permutation, $d_i(u)$ follows a hypergeometric distribution and the $\id{u,j}$'s are negatively dependent. 

Applying Chernoff's bound for negatively dependent random variables (\Cref{thm:chernoff}), we have $\prob{d_i(u) \geq (1+\delta_u) \hat{d_u}} \leq \exp\left(-\min\{\delta_u, \delta_u^2\} \hat{d_u}/3\right)$. Choose $\delta_u = \frac{\Delta}{N\hat{d_u}\sqrt{\alpha}}$. Then for $\delta_u \leq 1$, we have:
\begin{align*}
\prob{d_i(u) \geq (1+\delta_u) \hat{d_u}} & \leq \exp\left(-\frac{\Delta^2}{3 N^2\hat{d_u} \alpha}\right) & \\
& \leq \exp\left(-\frac{\Delta}{3 N \alpha}\right) & \text{($d_u \leq \Delta / N$)} \\
& = \exp\left(-\frac{\alpha n \Delta}{3 E}\right) & \text{($N = \frac{E}{\alpha^2 n}$)}  \\
& \leq \exp\left(-\alpha/3\right) = 1/\poly(n) & \text{($E \leq n \Delta, \alpha = \Omega(\log n)$)} 
\end{align*}
where we may adjust the degree of the $\poly(n)$ as high as we like (by adjusting $\alpha$). The case for $\delta_u > 1$ achieves similar $1/\poly(n)$ bounds.

Thus for every $u$ in $G_i$, we have $d_i(u) \leq (1+\delta) \hat{d_u}$ with probability $1 - 1/\poly(n)$. Let $\id{i}$ be the indicator variable for this event happening for $G_i$. By a union bound, we have $$\prob{\cap_{i=1}^N \id{i}} \geq  1 - \sum_{i=1}^N (1 - \prob{\id{i}}) \geq 1 - 1/\poly(n).$$

Let $u$ be a node of maximum degree $\Delta$. The analysis above means we colour every chunk with at most $(1 + \delta_u) \hat{d_u} + 1$ colours.
Thus the total number of colours used is at most $N((1+\delta_u)\hat{d_u} + 1) = \Delta + O(\Delta/\sqrt{\alpha})$ with probability at least $1 - 1/\poly(n)$.
\end{proof}

\section{A simple algorithm for edge colouring under adversarial arrivals}
When the edges are streamed in adversarially, consider the following simple algorithm:
\begin{tcolorbox}[title={Algorithm 2: A simple algorithm for adversarial arrivals}]
\renewcommand{\thempfootnote}{\arabic{mpfootnote}}
\label{alg:simple-adversarial}
For each node $u$, generate $s > 36 \log n$ random bits. Partition the edge set $E$ into bipartite graphs $B_1, B_2, \ldots, B_s$, by the following procedure:
\begin{itemize}
\item Given $e=(u, v)$ let $\mathcal{D}$ be the indices where the random bits of $u$ and $v$ differ.
\item Choose an index $i\in \mathcal{D}$ uniformly at random and assign $(u,v)$ to $B_i$.
\end{itemize}
For each node $u \in B_i$, store a counter $C^{(i)}_u$ initially set to 0. Upon streaming in an edge $(u, v) \in B_i$,\footnote{Where $u$ lies in the left partition and $v$ lies in the right partition of its bipartite graph.} output the colour $(i, C^{(i)}_u, C^{(i)}_v)$. Increment both $C^{(i)}_u$ and $C^{(i)}_v$ by 1.\footnote{The version of the algorithm we use on each bipartite graph $B_i$ is the same as the one used in~\cite{BDHKS19}.}
\end{tcolorbox}

\begin{theorem}
Algorithm~\ref{alg:simple-adversarial} colours any input graph with $(1 + o(1))\Delta^2 / s$ colours in $\tilde{O}(ns)$ space.
\end{theorem}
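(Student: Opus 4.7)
The plan has two ingredients: a structural bound on the number of colours each $B_i$ contributes, and a high-probability bound on the maximum degree of each $B_i$.

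\emph{Validity and per-piece colour count.} First, observe that within $B_i$ the partition of $V$ according to the $i$-th random bit is consistent, so every edge of $B_i$ crosses the bipartition and every vertex keeps a fixed side throughout $B_i$. The emitted colour $(i, C_u^{(i)}, C_v^{(i)})$ is then a proper edge colouring: two edges of $B_i$ sharing endpoint $u$ disagree in $C_u^{(i)}$ since the counter strictly increases, while edges from distinct $B_i$'s disagree in the first coordinate. Because $C_u^{(i)}$ only attains values in $\{0,1,\ldots,d_u^{(i)}-1\}$, the number of distinct colours of the form $(i,\cdot,\cdot)$ is at most $\Delta_i^L \cdot \Delta_i^R \le \Delta_i^2$, where $\Delta_i$ is the maximum degree of $B_i$. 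The total colour count is therefore bounded by $\sum_{i=1}^s \Delta_i^2$.

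\emph{Concentrating $\Delta_i$.} Fix a vertex $u$ and index $i$, and condition on $u$'s random bitstring $b_u$. For a neighbour $v$, let $Z_j = \id{b_u(j) \neq b_v(j)}$; conditional on $b_u$, the $Z_j$ are i.i.d.\ Bernoulli$(1/2)$ because $b_v$ is uniform. By symmetry over the $s$ coordinates,
$$
\prob{(u,v) \in B_i \mid b_u} \;=\; \frac{1 - 2^{-s}}{s},
$$
a quantity \emph{independent} of $b_u$. Furthermore, conditional on $b_u$, the events $\{(u,v) \in B_i\}$ are independent across distinct neighbours $v$, since each depends only on the independent bits $b_v$. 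Hence $d_u^{(i)}$ is a sum of at most $\Delta$ independent Bernoullis of mean $(1-2^{-s})/s$, with expectation at most $\Delta/s$.

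\emph{Chernoff, summation, and space.} A standard Chernoff bound gives $\prob{d_u^{(i)} \geq (1+\epsilon)\Delta/s} \leq \exp(-\epsilon^2 \Delta/(3s))$; choosing $\epsilon = o(1)$ with $\epsilon^2 \Delta/s = \omega(\log n)$---the regime where $\Delta^2/s$ is the binding bound, since for $\Delta = O(s\polylog n)$ the target is $\polylog n$ and swamps easily by trivial means---and union-bounding over the $ns$ pairs $(u,i)$ yields $\Delta_i \leq (1+o(1))\Delta/s$ for every $i$ with probability $1 - 1/\poly(n)$. Summing, the total number of colours used is at most $\sum_i \Delta_i^2 \leq s \cdot ((1+o(1))\Delta/s)^2 = (1+o(1))\Delta^2/s$. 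For the space, the $n$ random bitstrings take $O(ns)$ bits and up to $s$ counters of $O(\log n)$ bits per node take $O(ns \log n)$ bits, giving $\tilde{O}(ns)$ total.

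The main obstacle, I expect, is the symmetry identity $\prob{(u,v)\in B_i \mid b_u} = (1-2^{-s})/s$ independent of $b_u$: it is this fact that licenses the conditional-independence structure across the neighbours of $u$ and makes the Chernoff step routine. Without that decoupling, one would have to grapple with dependencies induced by the shared random bits at $u$.
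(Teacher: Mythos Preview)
Your proof is correct and follows the same outline as the paper's: bipartiteness of each $B_i$, validity of the counter-based colouring, concentration of each $\Delta_i$ around $\Delta/s$, and the resulting $s\cdot(\Delta/s)^2=(1+o(1))\Delta^2/s$ colour count with $\tilde O(ns)$ space. The one noteworthy difference is the concentration step. The paper first uses a Chernoff-plus-union-bound over all vertex pairs to show every pair has at least $s/4$ differing bits, and only then asserts that the degrees in $B_i$ are tightly concentrated; you bypass that detour entirely by conditioning on $b_u$ and observing that the events $\{(u,v)\in B_i\}$ become independent Bernoulli$\bigl((1-2^{-s})/s\bigr)$ variables across the neighbours $v$, so Chernoff applies directly to $d_u^{(i)}$. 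Your route is more explicit about exactly where the independence comes from, while the paper's preliminary ``$\ge s/4$ differing bits'' step is not actually needed for your argument. One small slip: when $\Delta = O(s\polylog n)$ the target $\Delta^2/s$ is $\Theta(s\polylog^2 n)$, not ``$\polylog n$'' as you write; but your ``trivial means'' intuition matches the paper's handling of this regime---the whole graph then has $O(n\Delta)=\tilde O(ns)$ edges and can simply be stored and coloured offline.
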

\begin{proof}
First, we show that the graphs $B_i$ produced in Algorithm~\ref{alg:simple-adversarial} are indeed bipartite. By construction, every edge $(u,v)$ in graph $B_i$ connects a node whose $i$-th bit is 0 with a node whose $i$-th bit is 1. Thus the left partition of $B_i$ is precisely the nodes with their $i$-th bit equal to 0 and the right partition is precisely the nodes with their $i$-th bit equal to 1.

Next, we show that the maximum degree of each $B_i$ is concentrated about $\Delta / s$. For any two vertices $u$ and $v$, the expected number of differing random bits is $s/2$. By a Chernoff bound, 
$$\prob{\text{\# differing bits between $u$ and $v$} \leq s/4} \leq \exp(-s/12) \leq 1/n^3.$$
Thus by a union bound, every pair of nodes will have at least $s/4$ differing bits. By symmetry, any index is equally likely to differ between any pair of nodes. For each edge, we choose a random index out of all the differing indices to assign the edge to. Thus the maximum degree of each $B_i$ is $\Delta / s$ in expectation. Furthermore, the expectation is tightly concentrated around $\Delta / s$ when $\Delta/s = \Omega(\log n)$ (when $\Delta / s = o(\log n)$, $O(ns \polylog n)$ is enough space to store the entire graph).

Finally, we prove the correctness and space guarantees of the algorithm. In the colours we assign, the first index guarantees that each bipartite graph uses a distinct palette of colours. Now fix a particular bipartite graph $B_i$. We first argue correctness. Let $u$ be a node of the left partition. Since a counter is incremented whenever it's used, it's clear that no edge incident on node $u$ will ever reuse a value for the second coordinate of the colour. The right partition follows a similar argument for the third coordinate of the colour. Note the bipartite structure of the graph is crucial here, as it allows us to distinguish between the left node and right node of every edge. 

Since we increment by 1 on the addition of every edge incident to a node, the counter values can only be incremented as high as the maximum degree of the graph. As the maximum degree of each $B_i$ is $\Delta/s$ in expectation, the counter values are at most $\Delta/s$ in expecation (and tightly concentrated when $\Delta$ is large). Summing over all $s$ bipartite graphs, the total number of possible colours is tightly concentrated around $s(\Delta/s)^2 = \Delta^2/s$.

Finally, the algorithm uses at most $O(ns \log (\Delta / s))$ space, as each vertex uses $s$ counters, and each counter is at most $O(\Delta / s)$.
\end{proof}

\paragraph*{Worst case examples.} The bound of $O(\Delta^2/s)$ can be achieved in the worst case for Algorithm~\ref{alg:simple-adversarial} given access to the randomness of the algorithm. Let an $t$-star be the graph of $t$ edges incident to node $u$, and let $s$ be the space parameter $s$ in Algorithm~\ref{alg:simple-adversarial}. By streaming in $\poly(\Delta s)$ copies of a $ts$-star, we guarantee the existence of $t$-stars in each of the bipartite subgraphs $B_1, B_2, \ldots, B_s$ for any $t$. Now any colour $(i, j, k)$ may be created by connecting the centres of a $j$-star and a $k$-star in graph $B_i$. Note that we need to create $\poly(s)$ copies of the $j$-star and $k$-star to connect, so that the edge connecting the centres of the stars falls into $B_i$. Creating all possible combinations of $(i,j,k)$, we obtain a graph using $O(\Delta^2/s)$ colours in $\poly(\Delta s)$ nodes with maximum degree $\Delta$. 

\bibliography{main}
\bibliographystyle{plain}

\appendix

\section{Appendix}

\subsection{Auxiliary Results}
\begin{theorem}
  \label{thm:chernoff}
  Let $X_1, \ldots, X_n$ be negatively dependent random variables such that $X_i \in [0,1]$ with probability 1. Define $X = \sum_{i=1}^n X_i$ and let $\mu = \mathbb{E} X$.
  Then, for any $\epsilon > 0$, we have
  \[
      \Pr[X \geq (1 + \epsilon) \mu ] \leq \exp\left( - \frac{\min\{\epsilon, \epsilon^2\} \mu}{3} \right).
  \]
\end{theorem}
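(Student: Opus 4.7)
The plan is to follow the standard exponential-moment (Chernoff) template, with negative dependence playing the role that independence plays in the classical derivation. For a parameter $t > 0$ to be chosen later, Markov's inequality applied to $e^{tX}$ gives
\[
\prob{X \geq (1+\epsilon)\mu} \;\leq\; e^{-t(1+\epsilon)\mu}\, \mathbb{E}[e^{tX}].
\]
I would then factor $\mathbb{E}[e^{tX}] = \mathbb{E}\bigl[\prod_i e^{tX_i}\bigr]$. Under the standard notion of negative dependence (negative association) used in the algorithms literature, the $X_i$ satisfy $\mathbb{E}[\prod_i f_i(X_i)] \leq \prod_i \mathbb{E}[f_i(X_i)]$ for every family of coordinatewise non-decreasing $f_i$. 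Since $x \mapsto e^{tx}$ is non-decreasing for $t > 0$, this yields $\mathbb{E}[e^{tX}] \leq \prod_i \mathbb{E}[e^{tX_i}]$, reducing matters to bounding each factor exactly as in the independent case.

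Each factor can then be controlled by the convexity bound $e^{tx} \leq 1 + (e^t - 1)x$ on $x \in [0,1]$, which gives $\mathbb{E}[e^{tX_i}] \leq 1 + (e^t - 1)\mu_i \leq \exp((e^t - 1)\mu_i)$ for $\mu_i = \mathbb{E}[X_i]$. Multiplying over $i$ and using $\sum_i \mu_i = \mu$ yields $\mathbb{E}[e^{tX}] \leq \exp((e^t - 1)\mu)$. Setting $t = \log(1 + \epsilon)$ then produces the closed-form Chernoff expression
\[
\prob{X \geq (1+\epsilon)\mu} \;\leq\; \left(\frac{e^\epsilon}{(1+\epsilon)^{1+\epsilon}}\right)^\mu \;=\; \exp\bigl(-\mu \cdot g(\epsilon)\bigr),
\]
where $g(\epsilon) = (1+\epsilon)\log(1+\epsilon) - \epsilon$.

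To obtain the form in the theorem I would split on $\epsilon$. For $0 < \epsilon \leq 1$, a Taylor expansion shows $g(\epsilon) = \epsilon^2/2 - \epsilon^3/6 + \cdots \geq \epsilon^2/3$, giving the $\epsilon^2$ branch of the minimum. For $\epsilon > 1$, an elementary estimate (comparing $g$ with $\epsilon/3$ at $\epsilon = 1$ and noting $g'(\epsilon) = \log(1+\epsilon) > 1/3$ for $\epsilon \geq 1$) gives $g(\epsilon) \geq \epsilon/3$, which is the $\epsilon$ branch. Combining the two cases yields $g(\epsilon) \geq \min\{\epsilon, \epsilon^2\}/3$ and hence the stated inequality.

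The only non-routine step is the MGF factorization, since the classical Chernoff derivation takes independence as an input. Here I would make explicit appeal to the definition of negatively associated random variables (as formalized by Joag-Dev and Proschan, or in Dubhashi and Panconesi's exposition) to justify passing $\mathbb{E}$ inside the product of the monotone functions $e^{tX_i}$. Every other step — the choice of $t$, the convexity bound, and the final two-case comparison — is a purely calculus exercise identical to the independent case.
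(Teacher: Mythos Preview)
The paper does not actually prove this theorem; it appears in the appendix under ``Auxiliary Results'' as a stated fact without proof, and is invoked as a black box in the analysis of Algorithm~1. Your derivation is correct and is exactly the standard argument one would supply: the MGF method plus the negative-association inequality $\mathbb{E}\bigl[\prod_i f_i(X_i)\bigr] \le \prod_i \mathbb{E}[f_i(X_i)]$ for monotone $f_i$, followed by the routine calculus splitting on $\epsilon \lessgtr 1$.

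One small point worth flagging: the theorem statement says ``negatively dependent,'' which in some sources is weaker than negative association (e.g.\ pairwise negative correlation alone does not give the product-of-expectations inequality you need). You already anticipated this by explicitly invoking negative association in the sense of Joag-Dev--Proschan / Dubhashi--Panconesi, which is the right move; in the paper's actual application the $\id{u,j}$ arise from sampling without replacement and are negatively associated, so the hypothesis you need is available.
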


\end{document}